\newcommand{\Rset}{\mathbb{R}}
\newcommand{\Uset}{\mathbb{U}}
\newtheorem{theorem}{Theorem}
\begin{document}
%
%
%
%
%

\title{Solving the recoverable robust shortest path problem in DAGs}

\author[1]{Marcel Jackiewicz}
\author[1]{Adam Kasperski\footnote{Corresponding author}}
\author[1]{Pawe{\l} Zieli\'nski}

\affil[1]{
Wroc{\l}aw  University of Science and Technology, Wroc{\l}aw, Poland\\
            \texttt{\{marcel.jackiewicz,adam.kasperski,pawel.zielinski\}@pwr.edu.pl}}

\maketitle              
\begin{abstract}
This paper deals with the recoverable robust shortest path problem
under the interval uncertainty
representation. The problem is known to be   strongly 
NP-hard and not approximable in  general digraphs.
Polynomial time algorithms for 
the problem under consideration in DAGs are proposed.
\end{abstract}

\noindent \textbf{Keywords}:  robust optimization,  interval uncertainty,  shortest path.
\section{Introduction}
In the \emph{shortest path problem}  (\textsc{SP}),
we are given a \emph{multidigraph} $G=(V,A)$ that
 consists
of a finite set of nodes~$V$, $|V|=n$,  and a finite multiset of
arcs~$A$, $|A|=m$. A deterministic cost is associated  with each arc, two nodes  $s\in V $ and $t\in V$ are
distinguished as the \emph{starting node} and the
\emph{destination} node, respectively. In particular, node~$s$ is
called \textit{source} if no arc enters $s$ and node~$t$ is called
\textit{sink} if no arc leaves~$t$. Let $\Phi$ be the set of all simple $s$-$t$ paths in~$G$.  
We wish to find
  a simple $s$-$t$ path in $G$        
with
the minimum total cost. The shortest path problem can be solved efficiently using several polynomial time algorithms (see, e.g.,~\cite{AMO93}). 

In the \emph{recoverable robust} version of~\textsc{SP} (\textsc{Rob Rec SP} for short), 
 we are given: the \emph{first-stage arc costs}  $C_e\geq 0$, $e\in A$, which are precisely known;
the set $\mathbb{U}$ of possible realizations
of uncertain second-stage arc costs, called \emph{scenarios}, represented
in this paper by the Cartesian product of interval costs, namely,
$\Uset=\{S=(c_e^S)_{e\in A}\,:\, c_e^S\in [\hat{c}_e, \hat{c}_e+\Delta_e], e\in A\} \subset \Rset_{+}^{|A|}$,
where $\hat{c}_e$ is a \emph{nominal arc cost} of $e\in A$ and $\Delta_e\geq 0$ is the maximum 
deviation of the cost of~$e$ from its nominal value;
and a
\emph{recovery parameter}~$k$, $0\leq k < |V|$.
A decision process in \textsc{Rob Rec SP} is 2-stage and consists in finding a path~$X\in \Phi$
 in the first stage. Then in the second (recovery) stage, after a cost scenario~$S=(c_e^S)_{e\in A}$ reveals,
the first stage path~$X$ can be modified by choosing a cheapest path~$Y$, under the revealed
cost scenario~$Y$, from its \emph{neighborhood}~$\Phi(X,k)$ depending on the recovery parameter~$k$
(a limited recovery action is allowed). Thus,
the goal is to find a pair of paths $X^*\in \Phi$ (the first stage path) and $Y^*\in \Phi(X^*,k)$
  (the second stage path), which minimize the sum of total costs of~$X^*$ and~$Y^*$
 at the worst case, i.e. 
 \begin{equation}
\textsc{Rob Rec SP}: \; \min_{X\in \Phi} \left(\sum_{e\in X} C_e + \max_{S\in \Uset}\min_{Y\in \Phi(X,k)} \sum_{e\in Y} c_e^S\right).
\label{rrsp}
\end{equation}
We examine here
the neighborhood 
$\Phi(X,k)=\{Y\in\Phi \,:\,  |Y\setminus X|\leq k\}$,
called  the \emph{arc inclusion  neighborhood} (see, e.g.,~\cite{B12,NO13}).
Note that if $k=0$, then \textsc{Rob Rec SP}  is equivalent to   \textsc{SP} problem under the arc costs
 $C_e +\hat{c}_e+\Delta_e$, $e\in A$.   Thus,
from now on, we assume that~$k\geq 1$.

\section{Recoverable  (robust) shortest path problem}
We start with a simple observation (see~\cite{B12}), which 
 simplifies  considerably the \textsc{Rob Rec SP}  
under scenario set~$\Uset$, namely,
$\max_{S\in \Uset}\min_{Y\in \Phi(X,k)} \sum_{e\in Y} c_e^S=
\min_{Y\in \Phi(X,k)} \sum_{e\in Y} (\hat{c}_e+\Delta_e)$.
Hence  \textsc{Rob Rec SP} is equivalent 
 the following problem
 \begin{equation}
\textsc{Rec SP}: \; \min_{X\in \Phi, Y\in \Phi(X,k)} \left(\sum_{e\in X} C_e +  \sum_{e\in Y} \overline{c}_e\right)
\label{rsp}
\end{equation}
 called
\emph{recoverable   shortest path problem}, where  $\overline{c}_e$ stands for $\hat{c}_e+\Delta_e$.
Thus, from now on, we will study the \textsc{Rec SP} problem instead of 
 the \textsc{Rob Rec SP} under~$\Uset$.

In~\cite{B12}, it has been proved that the  \textsc{Rec SP} (equivalently \textsc{Rob Rec SP}) problem 
with~$\Phi(X,k)$
 is strongly 
NP-hard and not approximable in  general digraphs, even
if the recovery parameter~$k=2$.

In this section, we show that \textsc{Rec SP} with can be solved in polynomial time if the input graph $G$ is an acyclic multigraph. Observe that we can drop the assumption about  nonnegativity of the arc costs for this class of graphs. We first assume that~$G$ is layered, and then we will generalize the result for arbitrary acyclic multigraphs. We finish with
arc series-parallel multidigraphs. 

\subsection{Layered multidigraph}
In a layered multigraph $G=(V,A)$ the set of nodes $V$ is partitioned into $H$ disjoint subsets, i.e. $V=
  V_1\cup \cdots \cup V_H$, and the arcs go only from nodes in $V_h$ to nodes in $V_{h+1}$ for each
  $h\in [H-1]$. We can assume  that $s\in V_1$ and $t\in V_H$.
\begin{theorem}
\textsc{Rec SP} with~$\Phi(X,k)$ in  
 a layered multidigraph~$G=(V,A)$ can solved in $O(|A||V|+|V|^2k)$ time.
 \label{tred}
\end{theorem}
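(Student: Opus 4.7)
The plan is to exploit the rigid layered structure: every $s$-$t$ path has length $H-1$ and visits exactly one node per layer, so any feasible pair $(X,Y)$ determines, at each layer $h$, whether $x_h=y_h$ (a \emph{sync layer}) or $x_h\neq y_h$. Let $1=h_0<h_1<\dots<h_p=H$ be the sync layers of $(X,Y)$, with common nodes $w_0=s,w_1,\dots,w_p=t$. Between consecutive sync nodes $w_i,w_{i+1}$, if $h_{i+1}=h_i+1$ the segment is a single pair of arcs that may or may not coincide, contributing $0$ or $1$ to $|Y\setminus X|$; if $h_{i+1}-h_i\geq 2$, then by definition of sync layer $x_h\neq y_h$ for every intermediate layer, so the two sub-paths are vertex-separated and contribute exactly $h_{i+1}-h_i$ arcs to $|Y\setminus X|$.

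\textbf{Preprocessing and DP.} I plan to precompute, for every ordered pair $(u,v)$ of nodes with $u$ in an earlier layer, the values $\alpha(u,v)$ and $\beta(u,v)$, the minimum costs of a $u$-$v$ path under the weights $C_e$ and $\overline{c}_e$ respectively, by running a DAG single-source shortest-path computation from every node; this costs $O(|V|(|A|+|V|))=O(|A||V|+|V|^2)$. I also precompute $\gamma(u,v)=\min_{e=(u,v)}(C_e+\overline{c}_e)$ for each pair joined by at least one arc, in $O(|A|)$ time. The DP I propose has state $F(w,j)$, the minimum accumulated cost to reach a sync node $w$ after charging $j$ units to a ``non-shared'' budget, with $F(s,0)=0$. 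From $(w,j)$, there are two kinds of transition: a \emph{shared} step, $F(w',j)\leftarrow F(w,j)+\gamma(w,w')$, to every next-layer $w'$ reachable by an arc; and a \emph{separation} step, $F(w',j+d)\leftarrow F(w,j)+\alpha(w,w')+\beta(w,w')$, to every $w'$ that is $d\geq 1$ layers ahead of $w$, subject to $j+d\leq k$. The answer is $\min_{0\leq j\leq k}F(t,j)$.

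\textbf{Correctness.} I would argue both directions. Given an optimal $(X^*,Y^*)$ and its full sync-layer decomposition, I build a DP trajectory by taking a shared step whenever $X^*,Y^*$ traverse the same arc in a length-one segment and a separation step otherwise; the resulting DP cost is at most the true cost, since $\gamma$ and $\alpha+\beta$ are infima over the quantities realised by $X^*,Y^*$, and the budget equals exactly $|Y^*\setminus X^*|\leq k$, which shows $\min_{j\leq k}F(t,j)$ does not exceed the optimum. Conversely, any DP trajectory attaining value $v$ can be instantiated as a concrete pair $(X,Y)$: at each shared step, both $X$ and $Y$ take the $\gamma$-attaining arc, and at each separation step, $X$ takes an $\alpha$-attaining path and $Y$ takes a $\beta$-attaining path. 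These sub-paths are simple in the DAG and live in disjoint layer-ranges except at their sync endpoints, so their concatenations $X,Y$ are simple $s$-$t$ paths of combined cost $v$, and the true $|Y\setminus X|$ is bounded by the accumulated budget and therefore by $k$; hence $(X,Y)$ is feasible, yielding optimum $\leq v$.

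\textbf{Complexity and main obstacle.} The DP has $O(|V|k)$ states; shared transitions contribute $O(|A|k)\leq O(|A||V|)$ total because $k<|V|$, and separation transitions contribute $O(|V|^2k)$ since the ordered pairs of nodes in distinct layers sum to $O(|V|^2)$. Together with preprocessing cost $O(|A||V|+|V|^2)\subseteq O(|A||V|+|V|^2k)$, this yields the claimed running time. The subtlety I expect to be delicate is the separation step with $d=1$: the $\alpha$- and $\beta$-attaining arcs may in fact coincide, so the DP credits one unit of budget for a pair that is physically shared; this only tightens the feasibility check, and since the shared transition always delivers the same-or-lower cost with zero budget whenever sharing is optimal, the DP minimum remains tight.
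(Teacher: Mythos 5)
Your proof is correct and follows essentially the same route as the paper: your DP over states $(w,j)$ is exactly the constrained--shortest--path instance the paper constructs on an auxiliary graph with zero-time ``shared'' arcs of cost $\min_e(C_e+\overline{c}_e)$ and time-$d$ ``separation'' arcs of cost $\alpha(u,v)+\beta(u,v)$, solved by the standard DAG dynamic program with budget $k$. The only cosmetic difference is that the paper assigns the separation arc the transition time $|Y^*_{ij}\setminus X^*_{ij}|$ rather than the full layer gap $d$; as you correctly observe, charging $d$ is harmless because a maximal separated segment of an optimal pair contributes exactly $d$ to $|Y\setminus X|$, so both accountings yield the same optimum.
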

\begin{proof}
We show a polynomial transformation from \textsc{Rec SP} in a layered multidigraph to 
  the Constrained Shortest Path Problem (\textsc{CSP} for short) in an acyclic multidigraph. In the \textsc{CSP} problem, 
  we are given an acyclic multidigraph~$G=(V,A)$  with $s\in V $ and $t\in V$.
  Each arc $e\in A$ has a cost~$c_e$ and a transition time~$t_e$.  A positive total transition time limit~$T$ is specified. 
  We seek an $s$-$t$ path $\pi$ in $G$ that minimizes the
total cost, subject to not exceeding the transition time limit~$T$.
The  \textsc{CSP} can be solved in $O(|A|T)$ time~\cite{H92}.

 We now build an acyclic multidigraph $G'=(V',A')$ in the corresponding instance of \textsc{CSP} as follows.
 We fix $V'=V$ and  $A'$ contains two types of arcs labelled as $e^{(0)}$ and $e^{(1)}$. Namely,
 for each pair of nodes $i\in V_{h}$ and $j\in V_{h+1}$, $h\in [H-1]$, such that there exists at least one arc 
 from~$i$ to~$j$ (note that $G$ can be multidigraph), 
 we add to $A'$ arc $e^{(0)}=(i,j)$ with the cost $c_{e^{(0)}}=\min_{e=(u,v)\in A\,:\,u=i, v=j} \{C_e+\overline{c}_e\}$
 and the transition time~$t_{e^{(0)}}=0$. 
 For each pair of nodes $i\in V_g$ and $j\in V_h$, $g,h\in [H]$, such that the node~$j$ is reachable from~$i$ and
 $1\leq h-g\leq k$,
 we add to $A'$ arc $e^{(1)}=(i,j)$ with the cost $c_{e^{(1)}}$ being the sum
 of the cost of a shortest path~$X^*_{ij}$ from~$i$ to~$j$ under~$C_e$, $e\in A$, and 
 the cost of a shortest path~$Y^*_{ij}$ from~$i$ to~$j$ under~$\overline{c}_e$, $e\in A$.
  The the transition time~$t_{e^{(1)}}=|Y^*_{ij}\setminus X^*_{ij}|$. Since $G$ is layered
 $|X^*_{ij}|=|Y^*_{ij}|= h-g$ and thus
 $|Y^*_{ij}\setminus X^*_{ij}|\leq  h-g$.
 Finally, we set $T=k$. 
 It is not difficult to show that 
 there is a pair of  paths $X\in \Phi$ and $Y\in \Phi(X,k)$ in~$G$,
 feasible to~\textsc{Rec SP}, with the cost  $\sum_{e\in X} C_e+\sum_{e\in Y}\overline{c}_e\leq UB$ 
 if and only if  there is a path~$\pi\in \Phi$ in~$G'$,  feasible to~\textsc{CSP},
  such that $\sum_{e\in \pi} c_e\leq UB$, where $UB\in \Rset$.
 The graph $G'$ can be constructed in $O(|A||V|)$ time.
 The number of arcs in $G'$ is $O(|V|^2)$, so the corresponding \textsc{CSP} problem can be solved in $O(|V|^2 k)$ time. Hence the overall running time is $O(|A||V|+|V|^2k)$.
  \end{proof}

\subsection{General acyclic multidigraph}

\begin{theorem}
\textsc{Rec SP} with~$\Phi(X,k)$ in  
 an acyclic  multidigraph~$G=(V,A)$ can solved in $O(|V|^2 |A| k^2)$ time.
\end{theorem}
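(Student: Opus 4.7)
My plan is to generalize the reduction of Theorem~\ref{tred} from layered to general acyclic multidigraphs. In the layered case a single $e^{(1)}$ arc per pair $(i,j)$ sufficed, because the length of any $i$-$j$ subpath is fixed by the layer difference; in a general DAG this breaks down, so I would parameterize the $e^{(1)}$ construction by the length~$\ell$ of the second-stage subpath. Concretely, I would build an auxiliary acyclic multidigraph $G'=(V,A')$ on the same node set, carrying over the $e^{(0)}=(i,j)$ arcs (one per pair of nodes joined by at least one arc of~$G$) with cost $\min_{e=(i,j)\in A}\{C_e+\overline{c}_e\}$ and transition time~$0$, and additionally, for every pair $(i,j)$ with $j$ reachable from $i$ and every $\ell\in\{1,\ldots,k\}$, an arc $e^{(1)}_\ell=(i,j)$ of transition time~$\ell$ and cost $C^*_i(j)+\overline{c}^*_i(j)[\ell]$, where $C^*_i(j)$ is the minimum $C$-cost of an $i$-$j$ path in~$G$ and $\overline{c}^*_i(j)[\ell]$ is the minimum $\overline{c}$-cost of an $i$-$j$ path using exactly $\ell$ arcs. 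The \textsc{CSP} instance uses the limit $T=k$.

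Equivalence would be proved via a meeting-point decomposition. For any feasible pair $(X,Y)$, let $X\cap Y=\{s=v_0,\ldots,v_p=t\}$ in topological order; the segments $(X_r,Y_r)$ between consecutive meeting points have disjoint internal vertex sets, so each segment is either a shared single arc (encoded by $e^{(0)}$, contributing $0$ to $|Y\setminus X|$) or a pair of arc-disjoint subpaths with $|Y_r\setminus X_r|=|Y_r|$ (encoded by $e^{(1)}_{|Y_r|}$). The resulting $s$-$t$ path in $G'$ has total transition time exactly $|Y\setminus X|\le k$ and identical cost. Conversely, any $s$-$t$ path of $G'$ with total transition $\le k$ concatenates into a pair $(X,Y)$ of $G$ whose actual $|Y\setminus X|$ is at most that transition sum, hence $\le k$.

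For the running time I would precompute all $C^*_i(j)$ in $O(|V|(|V|+|A|))$ by $|V|$ DAG single-source shortest-path runs, and all $\overline{c}^*_i(j)[\ell]$ in $O(|V||A|k)$ by a length-constrained DAG DP run once per source; the auxiliary graph $G'$ has $O(|V|^2 k + |A|)$ arcs, and Hassin's $O(|A'|T)$ algorithm for \textsc{CSP} then solves the instance in $O(|V|^2 k^2)$ time. The total is within the claimed $O(|V|^2|A|k^2)$ bound.

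The main subtlety to handle is that inside a single $e^{(1)}_\ell$ arc the $X$- and $Y$-subpaths are minimized independently under different cost functions and may share internal vertices (and hence arcs), which makes the transition time $\ell$ strictly exceed the true $|Y_r\setminus X_r|$. This overestimation is harmless: it only strengthens the \textsc{CSP} constraint, so every \textsc{CSP}-feasible solution still yields a pair of $G$ with $|Y\setminus X|\le k$; conversely the decomposition of an optimal $(X,Y)$ of $G$ at its own meeting points forces each segment's subpaths to be internally vertex-disjoint, so no overestimation occurs along that decomposition and the corresponding $G'$ path attains exactly the optimal \textsc{Rec SP} value.
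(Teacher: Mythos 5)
Your proposal is correct and follows essentially the same route as the paper: the same auxiliary graph on $V$ with zero-time $e^{(0)}$ arcs plus length-parameterized recovery arcs of transition time $\ell$, reduced to \textsc{CSP} with budget $T=k$, with the same complexity accounting. The only differences are cosmetic (you constrain the second-stage subpath to exactly $\ell$ arcs rather than at most $l$, and precompute the length-indexed shortest paths by a per-source DP instead of per-pair \textsc{CSP} calls), and your meeting-point decomposition spells out the equivalence that the paper leaves as ``not hard to show.''
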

\begin{proof}
Similarly as in the proof of Theorem~\ref{tred}, we give
a polynomial  time reduction from  \textsc{Rec SP} to
\textsc{CSP} in an acyclic multidigraph.
We need to extend the construction of the graph~$G'$, since
in general acyclic multidigraphs, disjoint paths $X^*_{ij}$  and $Y^*_{ij}$ from node~$i$ to node~$j$ can have various cardinalities, 
so cheaper recoverable action can require choosing more arcs.
 We fix $V'=V$. The set of arcs~$A'$ can now contain~$k+1$ types of arcs labeled as  $e^{(0)},\dots,e^{(k)}$.
 For each pair of nodes $i,j\in V$, $i\not= j$, such that there exists at least one arc 
 from~$i$ to~$j$, 
 we add to $A'$ the arc $e^{(0)}=(i,j)$ with the cost $c_{e^{(0)}}=\min_{e=(k,l)\in A\,:\,k=i, l=j} \{C_e+\overline{c}_e\}$
 and the transition time~$t_{e^{(0)}}=0$. 
 Consider a pair of nodes $i,j\in V$, $i\not=j$, such that $j$ is reachable from $i$. We first compute a path from $i$ to $j$ having the minimum number of arcs $L^*_{ij}$. If $L_{ij}^*>k$, then we do nothing. Otherwise, we find a shortest path $X^*_{ij}$ from $i$ to $j$ for the costs $C_e$. Then, for each $l=L_{ij}^*,\dots,k$, we find a shortest path $Y^{*l}_{ij}$ for the costs $\overline{c}_e$, $e\in A$, subject to the condition that $Y^{*l}_{ij}$ has at most $l$ arcs. Observe that $Y^{*l}_{ij}$ can be found in $O(|A|l)$ time by solving 
the  \textsc{CSP} problem with arc transition times equal to~1 and $T=l$.
 We add to $A'$ arc $e^{(l)}=(i,j)$ with the cost $c_{e^{(l)}}=\sum_{e\in X^*_{ij}} C_e+
 \sum_{e\in Y^{l*}_{ij}} \overline{c}_e$
 and with the transition time~$t_{e^{(l)}}=l$.
 Finally, we set $T=k$. 
 It is  not hard to show that
 there is a pair of  paths $X\in \Phi$ and $Y\in \Phi(X,k)$ in~$G$,
 feasible to~\textsc{Rec SP}, with the cost  $\sum_{e\in X} C_e+\sum_{e\in Y}\overline{c}_e\leq UB$ 
 if and only if  there is a path~$\pi\in \Phi$ in~$G'$,  feasible to~\textsc{CSP},
  such that $\sum_{e\in \pi} c_e\leq UB$, where $UB\in \Rset$.
 The graph $G'$ can be built in $O(|V|^2|A|k^2)$ time. The number of arcs in $G'$ is $O(|V|^2k)$ and the resulting 
 \textsc{CSP} problem can be solved in $O(|V|^2k^2)$ time. Thus the overall running time is $O(|V|^2 |A| k^2)$.  
\end{proof}

\subsection{Arc series-parallel multidigraph (ASP)}

In~\cite{B12}, {B{\"u}sing gave an idea of a polynomial algorithm for 
\textsc{Rec SP} with~$\Phi(X,k)$ in  an
 ASP multidigraph together with the theorem resulting from it~\cite[Theorem~5]{B12}.
 
 We now propose a complete $O(|A|k^2)$-algorithm for  ASP multidigraphs, which turned out to be far from trivial,
 with a proof of correctness. The algorithm is as follows.
We first recall that an ASP multidigraph~$G$  is associated with a rooted binary
tree~$\mathcal{T}$ called \emph{the binary decomposition tree
of~$G$} (see, e.g.,~\cite{VTL82}).
 Each leaf of the tree represents an arc in~$G$. Each
internal node $\sigma$ of~$\mathcal{T}$ is labeled  $\mathtt{S}$
or $\mathtt{P}$ and corresponds to the series or parallel composition
in $G$. Every node $\sigma$ of $\mathcal{T}$ corresponds to  an
ASP subgraph of~$G$, denoted by $G_{\sigma}$, defined by the subtree rooted at~$\sigma$.
The root of~$\mathcal{T}$ represents the input ASP multidigraph~$G$.
For each ASP subgraph~$G_{\sigma}$, we store the following three crucial pieces of information. Namely,
the first one, the cost of a shortest path from  the source  to  the sink in~$G_{\sigma}$ with under costs~$C_e$, $e\in A$:
\begin{equation}
C_{G_{\sigma}}=\min_{X\in \Phi_{G_{\sigma}}} \sum_{e\in X} C_e.
\label{locost}
\end{equation}
Here and subsequently, $\Phi_{G_{\sigma}}$ denotes the set of all paths from the source to the sink in~$G_{\sigma}$.
The second one, the $k$-element array~$\overline{\pmb{c}}_{G_{\sigma}}$, whose $l$th element~$\overline{\pmb{c}}_{G_{\sigma}}[l]$, $l=1,\ldots,k$,
is the cost of a shortest path from  the source  to  the sink in~$G_{\sigma}$  under the costs~$\overline{c}_e$, $e\in A$,
that uses exactly~$l$ arcs -- if such a path exists:
\begin{equation}
\overline{\pmb{c}}_{G_{\sigma}}[l]=
\begin{cases}
\displaystyle \min_{Y\in \Phi_{G_{\sigma}} \,:\,|Y|=l} \sum_{e\in Y}\overline{c}_e &\text{if $\{Y\in \Phi_ {G_{\sigma}}\,:\,|Y|=l\}\not=\emptyset$,}\\
+\infty&\text{otherwise}.
\end{cases}
\label{upcost}
\end{equation}
The third, the $(k+1)$-element array~$\pmb{c}^*_{G_{\sigma}}$, whose  $l$th element~$\pmb{c}^*_{G_{\sigma}}[l]$,
$l=0,\ldots,k$,
stores the cost of an optimal pair $X^*,Y^*\in \Phi_{G_{\sigma}}$ such that $|Y^*\setminus X^*|=l$ -- 
if such  paths exist:
\begin{equation}
\pmb{c}^*_{G_{\sigma}}[l]=
\begin{cases}
\displaystyle\min_{X,Y\in \Phi_{G_{\sigma}}\,:\,|Y\setminus X|=l} \left(\sum_{e\in X} C_e +  \sum_{e\in Y} \overline{c}_e\right)&
\text{if $\{X\in\Phi_{G_{\sigma}} \,:$}\\
&\text{$(\exists Y\in \Phi_{G_{\sigma}}) (|Y\setminus X|= l)\}\not=\emptyset$}\\
+\infty&\text{otherwise}.
\end{cases}
\label{optcost}
\end{equation}
For each leaf node~$\sigma$ of the tree~$\mathcal{T}$, in this case, the  subgraph~$G_{\sigma}$ consists of
the single arc~$e$, $e\in A$, the initial values in~(\ref{locost}),  (\ref{upcost}) and~(\ref{optcost}) are as follows:
$C_{G_{\sigma}}= C_e$;
$\overline{\pmb{c}}_{G_{\sigma}}[1]=\overline{c}_e,\;\overline{\pmb{c}}_{G_{\sigma}}[l]=+\infty, l=2,\ldots,k$;
$\pmb{c}^*_{G_{\sigma}}[0]=C_e+\overline{c}_e,\;\pmb{c}^*_{G_{\sigma}}[l]=+\infty, l=1,\ldots,k$.
By traversing the tree~$\mathcal{T}$ from the leaves to the root, we recursively 
construct each  ASP subgraph~$G_{\sigma}$ corresponding to the internal node~$\sigma$ and
compute 
the values in~(\ref{locost}),  (\ref{upcost}) and~(\ref{optcost}) associated to~$G_{\sigma}$,
 depending on the label of the node~$\sigma$.
If $\sigma$ is marked~$\mathtt{P}$, then 
we call Algorithm~\ref{parallel},
otherwise we  call Algorithm~\ref{series}, on the subgraph~$G_{\mathrm{left}(\sigma)}$ and $G_{\mathrm{right}(\sigma)}$,
where $\mathrm{left}(\sigma)$ and $\mathrm{right}(\sigma)$ are children of~$\sigma$ in~$\mathcal{T}$.
If $\sigma$ is the root of~$\mathcal{T}$, then  $G_{\sigma}=G$ and the array~$\pmb{c}^*_{G_{\sigma}}$
contains information about the cost 
of an optimal 
pair of paths~$X^*,Y^*\in \Phi$, 
of the optimal solution to \textsc{Rec SP} in~$G$, which is equal to $\min_{0\leq l\leq k} \pmb{c}^*_{G_{\sigma}}[l]$.
For simplicity of the presentation, we have shown only how to compute the costs stored in~$C_{G_{\sigma}}$, 
$\overline{\pmb{c}}_{G_{\sigma}}$ and~$\pmb{c}^*_{G_{\sigma}}$. The associated paths can easily
be reconstructed.
\begin{algorithm}
\begin{small}
Let $G_{\sigma}$ be the
 parallel composition of $G_{\mathrm{left}(\sigma)}$ and $G_{\mathrm{right}(\sigma)}$\;\label{parallel1}
$\pmb{c}^*_{G_{\sigma}}[0]\leftarrow \min\{\pmb{c}^*_{G_{\mathrm{left}(\sigma)}}[0], \pmb{c}^*_{G_{\mathrm{right}(\sigma)}}[0]\}$;
$C_{G_{\sigma}}\leftarrow \min\{C_{G_{\mathrm{left}(\sigma)}},C_{G_{\mathrm{right}(\sigma)}}\}$\;
 \label{parallel2}
\For{$l=1$ \emph{\KwTo} $k$}{
$\pmb{c}^*_{G_{\sigma}}[l]\leftarrow \min\{\pmb{c}^*_{G_{\mathrm{left}(\sigma)}}[l], \pmb{c}^*_{G_{\mathrm{right}(\sigma)}}[l], 
C_{G_{\mathrm{left}(\sigma)}}+
\overline{\pmb{c}}_{G_{\mathrm{right}(\sigma)}}[l],  C_{G_{\mathrm{right}(\sigma)}}+\overline{\pmb{c}}_{G_{\mathrm{left}(\sigma)}}[l]\}$
\label{parallel4}\;
$\overline{\pmb{c}}_{G_{\sigma}}[l]\leftarrow \min\{\overline{\pmb{c}}_{G_{\mathrm{left}(\sigma)}}[l],\overline{\pmb{c}}_{G_{\mathrm{right}(\sigma)}}[l]\}$ \label{parallel5}
}
\Return{$(G_{\sigma}, c^*_{G_{\sigma}}, C_{G_{\sigma}}, \overline{c}_{G_{\sigma}})$}
  \caption{$\mathtt{P}(G_{\mathrm{left}(\sigma)},\pmb{c}^*_{G_{\mathrm{left}(\sigma)}},\overline{\pmb{c}}_{G_{\mathrm{left}(\sigma)}}, G_{\mathrm{right}(\sigma)}, \pmb{c}^*_{G_{\mathrm{right}(\sigma)}},C_{G_{\mathrm{right}(\sigma)}},\overline{\pmb{c}}_{G_{\mathrm{right}(\sigma)}})$}
 \label{parallel}
\end{small} 
\end{algorithm}
\begin{algorithm}
\begin{small}
Let $G_{\sigma}$ be the
series composition of  $G_{\mathrm{left}(\sigma)}$ and $G_{\mathrm{right}(\sigma)}$\;
\lFor{$l=0$ \emph{\KwTo} $k$}{
$\pmb{c}^*_{G_{\sigma}}[l]\leftarrow \min_{0\leq j\leq l}\{\pmb{c}^*_{G_{\mathrm{left}(\sigma)}}[j]+\pmb{c}^*_{G_{\mathrm{right}(\sigma)}}[l-j]\}$
}\label{series3}
$C_{G_{\sigma}}\leftarrow C_{G_{\mathrm{left}(\sigma)}}+C_{G_{\mathrm{right}(\sigma)}}$;
$\overline{\pmb{c}}_{G_{\sigma}}[1]\leftarrow +\infty$\label{series4}\;
\lFor{$l=2$ \emph{\KwTo} $k$}{
$\overline{\pmb{c}}_{G_{\sigma}}[l]\leftarrow \min_{1\leq j< l}\{\overline{\pmb{c}}_{G_{\mathrm{left}(\sigma)}}[j]+\overline{\pmb{c}}_{G_{\mathrm{right}(\sigma)}}[l-j]\}$
} \label{series7}
\Return{$(G_{\sigma}, c^*_{G_{\sigma}}, C_{G_{\sigma}}, \overline{c}_{G_{\sigma}})$}
  \caption{$\mathtt{S}(G_{\mathrm{left}(\sigma)},\pmb{c}^*_{G_{\mathrm{left}(\sigma)}},\overline{\pmb{c}}_{G_{\mathrm{left}(\sigma)}}, G_{\mathrm{right}(\sigma)}, \pmb{c}^*_{G_{\mathrm{right}(\sigma)}},C_{G_{\mathrm{right}(\sigma)}},\overline{\pmb{c}}_{G_{\mathrm{right}(\sigma)}})$}
 \label{series}
\end{small} 
\end{algorithm}

We now prove the correctness of the above algorithm.
\begin{theorem}
\textsc{Rec SP} with~$\Phi(X,k)$ in  an
 ASP multidigraph~$G=(V,A)$ can solved in $O(|A|k^2)$ time.
\end{theorem}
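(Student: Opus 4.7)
The plan is to prove correctness by structural induction on the binary decomposition tree $\mathcal{T}$ of the ASP multidigraph, then analyze the running time per node and multiply by the number of nodes of $\mathcal{T}$.

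\textbf{Base case.} For each leaf $\sigma$ of $\mathcal{T}$, the subgraph $G_\sigma$ is a single arc $e$, so the only source--sink path is $e$ itself, and the initial values of $C_{G_\sigma}$, $\overline{\pmb{c}}_{G_\sigma}$, and $\pmb{c}^*_{G_\sigma}$ listed in the text are immediate from~(\ref{locost}), (\ref{upcost}), and~(\ref{optcost}).

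\textbf{Inductive step.} Assume the three arrays are correct for $G_{\mathrm{left}(\sigma)}$ and $G_{\mathrm{right}(\sigma)}$; I would show that Algorithms~\ref{parallel} and~\ref{series} produce the correct arrays for $G_\sigma$. The parallel case rests on the observation that every source--sink path in $G_\sigma$ is entirely contained in exactly one of the two children. Hence $C_{G_\sigma}$ and $\overline{\pmb{c}}_{G_\sigma}[l]$ are obtained by taking the pointwise minimum, as in lines~\ref{parallel2} and~\ref{parallel5}. For a feasible pair $(X,Y)$ with $|Y\setminus X|=l$ there are exactly four mutually exhaustive cases: both paths in the left child, both in the right child, or $X$ and $Y$ in different children. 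In the mixed cases the arc sets of $X$ and $Y$ are disjoint, so $|Y\setminus X|=|Y|=l$, and minimizing the cost separately over $X$ (giving $C_{G_{\mathrm{left}(\sigma)}}$ or $C_{G_{\mathrm{right}(\sigma)}}$) and over $Y$ (giving $\overline{\pmb{c}}_{\bullet}[l]$) is valid because $X$ and $Y$ live on disjoint arc sets; this yields line~\ref{parallel4}. For $l=0$ the mixed cases are impossible (every path has at least one arc), explaining why the $l=0$ update ignores them.

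The series case uses the fact that every source--sink path in $G_\sigma$ decomposes uniquely into a source--sink path in the left child concatenated with one in the right child. Applied to a pair $(X,Y)=(X_L\!\cdot\! X_R,\,Y_L\!\cdot\! Y_R)$, the disjointness of the arc sets of the two children gives the key identity
\[
|Y\setminus X| \;=\; |Y_L\setminus X_L| + |Y_R\setminus X_R|,
\]
which turns the minimization into the convolution in line~\ref{series3}. The corresponding identities for $C_{G_\sigma}$ and for $\overline{\pmb{c}}_{G_\sigma}[l]$ (with both parts nonempty, hence $j\in\{1,\ldots,l-1\}$ and $\overline{\pmb{c}}_{G_\sigma}[1]=+\infty$) are derived the same way and match lines~\ref{series4}--\ref{series7}. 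Applying the induction at the root gives $\min_{0\le l\le k}\pmb{c}^*_G[l]$ as the optimum of \textsc{Rec SP}.

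\textbf{Running time.} The binary decomposition tree has $|A|$ leaves and therefore $|A|-1$ internal nodes, and it can be built in $O(|A|)$ time by the algorithm of~\cite{VTL82}. At each internal node, Algorithm~\ref{parallel} performs $O(k)$ work while Algorithm~\ref{series} performs $O(k^2)$ work because of the convolution in line~\ref{series3}. Summing gives $O(|A|k^2)$. I expect the main obstacle to be a careful justification of the arc-disjointness identity in the series case and of the exhaustive four-way split in the parallel case, since these are what make the local updates at each node correct rather than just plausible; once those are nailed down the induction and the running-time count are routine.
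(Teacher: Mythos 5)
Your proposal is correct and follows essentially the same route as the paper: structural induction on the binary decomposition tree, with the parallel case handled via the four-way split of the pair $(X,Y)$ between the two arc-disjoint children and the series case via the additivity $|Y\setminus X|=|Y_L\setminus X_L|+|Y_R\setminus X_R|$ that yields the convolutions, followed by the same $O(|A|k^2)$ count over the $O(|A|)$ tree nodes. You in fact spell out some details the paper leaves implicit (the impossibility of mixed cases for $l=0$ and the disjointness identities), but the argument is the same.
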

\begin{proof}
We first prove that for each node~$\sigma$  of the decomposition tree~$\mathcal{T}$,
the graph~$G_{\sigma}$ is the ASP subgraph of~$G$ and the 
corresponding  costs~(\ref{locost}),  (\ref{upcost}) and~(\ref{optcost})
are correctly computed by the algorithm. The first part of the claim is due to~\cite{VTL82}.
If $|A|=1$, then the claim trivially holds.
Assume that $|A|\geq 2$. The rest 
 proof of the claim is by induction on the number of iterations in which   the algorithm
traverses the tree~$\mathcal{T}$ from the leaves to the root.

The \emph{base case}  ($|A|=2$) is trivial to verify, since
the node~$\sigma$ is the root of~$\mathcal{T}$ and has two  leaves
corresponding to 
two single arc subgraphs~$G_{\mathrm{left}(\sigma)}$ and~$G_{\mathrm{right}(\sigma)}$ and 
the algorithm proceeds for only one iteration.
 For the  \emph{induction step}, let~$\sigma$ be an internal  node of~$\mathcal{T}$ with two leaves  $\mathrm{left}(\sigma)$ and $\mathrm{right}(\sigma)$.
 By the induction hypothesis, $G_{\mathrm{left}(\sigma)}$ and~$G_{\mathrm{right}(\sigma)}$ are ASP subgraphs and
the values~(\ref{locost}),  (\ref{upcost}) and~(\ref{optcost}) associated  with them
 are correctly computed. We now need to consider two cases that depend on the label of~$\sigma$.
 The first case (the label~$\mathtt{P}$), when Algorithm~\ref{parallel} is called for 
$G_{\mathrm{left}(\sigma)}$ and~$G_{\mathrm{right}(\sigma)}$.
Atfer the parallel composition of $G_{\mathrm{left}(\sigma)}$ and~$G_{\mathrm{right}(\sigma)}$,
the resulting subgraph~$G_{\sigma}$ is an ASP subgraph. 
Note that, $\Phi_{G_{\sigma}}= \Phi_{\mathrm{left}(\sigma)} \cup \Phi_{G_{\mathrm{left}(\sigma)}}$,
where $\Phi_{G_{\mathrm{left}(\sigma)}}\cap \Phi_{G_{\mathrm{left}(\sigma)}}=\emptyset$.
Hence, we immediately get the costs in lines~\ref{parallel2},  \ref{parallel4} and~\ref{parallel5}
(see Algorithm~\ref{parallel}) are equal to the ones~(\ref{locost}), (\ref{upcost})  and~(\ref{optcost}).
The costs $\pmb{c}^*_{G_{\sigma}}[l]$, for $l=1,\ldots,k$, (see line~\ref{parallel5}),
are correctly computed as well.
Indeed, the first two terms are obvious. The last two take into account the cases, when the
optimal paths $X^*\in \Phi_{G_{\sigma}}$ and $Y^*\in \Phi_{G_{\sigma}}$ in~(\ref{optcost}) are such that
$X^*\in \Phi_{G_{\mathrm{left}(\sigma)}}$ and $Y^*\in \Phi_{G_{\mathrm{right}(\sigma)}}$ or
$X^*\in \Phi_{G_{\mathrm{right}(\sigma)}}$ and $Y^*\in \Phi_{G_{\mathrm{left}(\sigma)}}$.
Consider the second case (the label~$\mathtt{S}$) when Algorithm~\ref{series} is called.
The subgraph~$G_{\sigma}$, after  the series composition of $G_{\mathrm{left}(\sigma)}$ and~$G_{\mathrm{right}(\sigma)}$,
is an ASP subgraph and there is the node~$v$ in $G_{\sigma}$ such that $v$
the node after identifying  the sink of~$G_{\mathrm{left}(\sigma)}$
and the source of~$G_{\mathrm{right}(\sigma)}$.
Thus, every path $\pi\in \Phi_{G_{\sigma}}$ must traverse the node~$v$ and $\pi$ is 
the concatenation of $\pi_1\in  \Phi_{G_{\mathrm{left}(\sigma)}}$ and 
$\pi_2\in  \Phi_{G_{\mathrm{right}(\sigma)}}$, $\pi=(\pi_1,v,\pi_2)$.
From the above it follows that
for each $l=0,\ldots,k$, the cost
$\pmb{c}^*_{G_{\sigma}}[l]$ in $G_{\sigma}$
is the sum of its optimal counterparts, respectively, in  $G_{\mathrm{left}(\sigma)}$ and~$G_{\mathrm{right}(\sigma)}$,
for some $0\leq j^*\leq l$,  $l=j^*+l-j^*$. It is enough to find such~$j^*$ for each~$l$.
A similar argument holds for the costs
$\overline{\pmb{c}}_{G_{\sigma}}[l]$ for $l=2,\ldots,k$.
Therefore, the costs in lines~\ref{series3}, \ref{series4} and~\ref{series7}  (see Algorithm~\ref{series}) are
equal to the ones~(\ref{locost}), (\ref{upcost})  and~(\ref{optcost}).
This proves the claim.

Let $\sigma$  be  the root of~$\mathcal{T}$, then  $G_{\sigma}=G$. It immediately  follows  from the claim,
that  the array~$\pmb{c}^*_{G_{\sigma}}$ contains the optimal costs~(\ref{optcost}) for $l=0,\ldots,k$.
Thus, 
the cost 
of an optimal solution~$X^*,Y^*\in \Phi$ to \textsc{Rec SP} in~$G$ is equal to $\min_{0\leq l\leq k} \pmb{c}^*_{G_{\sigma}}[l]$.

Let us analyze the running time of the Algorithm. The binary
decomposition tree $\mathcal{T}$ of~$G$ can be constructed in $O(|A|)$ time~\cite{VTL82}.
The initialization can be done in $O(|A|)$ time.
The root of $\mathcal{T}$ can be
reached in $O(|A|)$ time.  Algorithms~\ref{parallel} and~\ref{series}
require~$O(k^2)$ time.
Hence, the running time of the  algorithm is $O(|A|k^2)$.
\end{proof}

\subsubsection*{Acknowledgements}
This work was supported by
 the National Science Centre, Poland, grant 2022/45/B/HS4/00355.

%
%
%

\end{document}